\newcommand{\answer}{\operatorname{ans}}
\newcommand{\filter}{\operatorname{filter}}
\newcommand{\define}{\operatorname{let}}
\newcommand{\infer}{\operatorname{fact}}
\colorlet{punct}{red!60!black}
\definecolor{background}{HTML}{EEEEEE}
\definecolor{delim}{RGB}{120,20,40}
\colorlet{numb}{magenta!60!black}
\lstdefinelanguage{sparql}{
	sensitive=false,
	extendedchars=true,
	literate={á}{{\'a}}1 {é}{{\'e}}1 {í}{{\'{\i}}}1 {ó}{{\'o}}1 {ú}{{\'u}}1
	{Á}{{\'A}}1 {É}{{\'E}}1 {Í}{{\'I}}1 {Ó}{{\'O}}1 {Ú}{{\'U}}1
	{ü}{{\"u}}1 {Ü}{{\"U}}1 {ñ}{{\~n}}1 {Ñ}{{\~N}}1 {¿}{{?``}}1 {¡}{{!``}}1
	{<}{{{\color{delim}<}}}{1}
        {>}{{{\color{delim}>}}}{1}
	{?}{{{\color{delim}?}}}{1}
	{*}{{{\color{delim}*}}}{1}
	{+}{{{\color{delim}+}}}{1}
	{/}{{{\color{delim}/}}}{1}
	{,}{{{\color{punct}{,}}}}{1}
        {;}{{{\color{punct}{;}}}}{1}
        {.}{{{\color{punct}{.}}}}{1}
        {:}{{{\color{punct}{:}}}}{1},
	morekeywords={select,from,where,order,by,distinct,limit,offset,optional,union,filter,prefix,bound,desc,regex,str,group,not,exists,minus,service,certain,maybe,and,as}
}
\lstdefinestyle{sparqld}{
	basicstyle=\small\ttfamily,
	identifierstyle=\color{black},
	keywordstyle=\color{blue!40!black}\bfseries,
	ndkeywordstyle=\color{greenCode}\bfseries,
	stringstyle=\color{ocherCode}\ttfamily,
	commentstyle=\color{darkgray}\ttfamily,
	language={sparql},
	tabsize=2,
	showtabs=false,
	showspaces=false,
	showstringspaces=false,
	extendedchars=true,
	escapechar=`,
	frame={single},
	breaklines=true,
	basewidth=0.5em,
	moredelim=[is][\color{magenta}]{~}{~},
	moredelim=**[is][\color{gray}]{£}{£},
	moredelim=**[is][\color{blue!50!black}]{$}{$},
	moredelim=[is][\color{orange!80!black}]{!}{!},
	moredelim=**[is][\color{green!50!black}]{¬}{¬},
    xleftmargin=2ex,
    xrightmargin=1ex,
    aboveskip=1.5ex,
    belowskip=1.5ex
}
\DeclareSymbolFont{symbols2}{LS1}{stixfrak}{m}{n}
\DeclareMathSymbol{\leftouterjoin}{\mathbin}{symbols2}{"11}
\newcommand{\LeftOuterJoin}{\mathrel{\scaleobj{0.8}{\leftouterjoin}}}
\begin{document}

\title{The Problem of Correlation and \\Substitution in SPARQL\\
{\normalsize\sc ---Extended Version---}}
\titlerunning{Correlation and Substitution in SPARQL}

\author{%
  Daniel Hernández\inst{1} \and
  Claudio Gutierrez\inst{1} \and
  Renzo Angles\inst{2}}

\authorrunning{Daniel Hernández et al.} 

\tocauthor{Daniel Hernández, Claudio Gutierrez, Renzo Angles}

\institute{%
  Universidad de Chile, Santiago, Chile \and
  Universidad de Talca, Curicó, Chile}

\maketitle              

\begin{abstract}
  Implementations of a standard language are expected to give same
  outputs to identical queries.  In this paper we study why different
  implementations of SPARQL (Fuseki, Virtuoso, Blazegraph and rdf4j)
  behave differently when evaluating queries with correlated
  variables.  We show that at the core of this problem lies the
  historically troubling notion of logical substitution.  We present a
  formal framework to study this issue based on Datalog that besides
  clarifying the problem, gives a solid base to define and implement
  nesting.  \keywords{Nested queries, SPARQL, Datalog, Incomplete
    data}
\end{abstract}

\section{Introduction}

A {\em subquery} is a query expression that occurs in the body of
another query expression, called the {\em outer query}.  A {\em
  correlated subquery} is one whose evaluation is dependent in some
way on data being processed in the outer query. Informally the data
got from the outer query should be replaced or {\em substituted} in
the corresponding places in the inner query.  Thus the notion of {\em
  substitution} comes to the heart of the problem at hand.

It is well known the complexities that this notion involves.
As is well known, it has been always a troubling concept and source of
error even to renowned logicians.\footnote{The story is recounted by
  Church in \cite{church1996introduction}, pp. 289-90 and Cardone and
  Hindley \cite{cardone2006history}, p.7. Russell and Whitehead,
  although used the notion, missed its formal statement in his {\em
    Principia} (1913). Hilbert and Ackermann gave an ``inadequate''
  statement in his 1928's Logic. Carnap in {\em Logicshe Syntax der
    Sprache} and Quine in {\em System of Logistic} definitions still
  contain problems.  Hilbert and Ackermann in 1934 gave finally a
  correct statement.}  The query language SPARQL is not an exception:
the notion of replacement (substitution) in the recommendation has an
insufficient definition and even contradictory pieces~\footnote{See
  \cite{DBLP:journals/corr/HernandezGA16} where we report that the
  substitution notion presented on Sec. 18.6 of the current
  specification \cite{sparql11} contradicts the statements ``Due to
  the bottom-up nature of SPARQL query evaluation, the subqueries are
  evaluated logically first, and the results are projected up to the
  outer query'' and ``Note that only variables projected out of the
  subquery will be visible, or in scope, to the outer query'', from
  Sec. 12.}.

We show in this paper that this notion is the source of problems that 
can be found in the evaluation of EXIST subqueries in SPARQL. The problem 
is highlighted when operators which incorporate possibly incomplete
information are present. 
The  following SPARQL query illustrates these problems. The query 
roughly asks for the id of persons and optionally 
their corporate email, subject to some conditions given by the expression in the
FILTER EXISTS:

\begin{lstlisting}[style=sparqld]
SELECT ?id ?email
WHERE { {{ ?id a :person } OPTIONAL { ?id :corpMail ?email }}
        FILTER EXISTS {
            {{ ?id a :person } OPTIONAL { ?id :privMail ?email }}
            FILTER (?email = *.com) } }
\end{lstlisting}
Surely the reader is facing the following problem: how to
interpret this query?  Well, you are not alone in your
vacillation: The most popular implementations of SPARQL do not
agree on it. For example, for the following database of persons
\setlength\tabcolsep{1em}
\begin{center}
\begin{tabular}{|c|c|c|c|c|c|c|}
\hline
id       &\tt    1&\tt    2&\tt    3&\tt    4&\tt    5&\tt    6\\\hline
corpMail &\tt*.com&\tt*.net&\tt*.com&\tt*.net&        &\tt     \\\hline
privMail &\tt*.net&\tt*.com&        &        &\tt*.com&\tt*.net\\\hline
\end{tabular}
\end{center}
Fuseki, Blazegraph, Virtuoso and rdf4j give almost all different
results.  Fuseki and Blazegraph give (\texttt{1},\texttt{*.com}),
(\texttt{3},\texttt{*.com}) and (\texttt{5},-); Virtuoso gives
(\texttt{1},\texttt{*.com}) and (\texttt{3},\texttt{*.com}); finally
rdf4j gives (\texttt{3},\texttt{*.com}) and
(\texttt{5},-).\footnote{The engines studied in this paper are Fuseki
  2.5.0, Blazegraph 2.1.1, Virtuoso 7.2.4.2, and rdf4j 2.2.1.}

What is going on? Not simple to unravel. The main problem is
how to assign the variables in order to evaluate the inner 
and outer expressions. Let us see
why not all systems agree in showing up person \verb|1|.  If we
evaluate first the inner query then the variable \verb|?email| is
bound to \verb|*.net| for person \verb|1|, thus the filter
\verb|?email| = \verb|*.com| fails, and so the whole expression inside the 
first filter fails, hence person \verb|1| is not shown in the output.
Now, if the inner pattern is evaluated {\em after} bounding the 
variable \verb|?email| to \verb|*.com|, then the OPTIONAL part does not
match and thus the last filter pass, hence person \verb|1| is outputted.

The intuition provided by the case of person \verb|1| is that some
systems evaluate the inner pattern before binding the \verb|?email| in the
outer query, and others do it after the binding.
 Now this intuitive philosophy does not work
to understand why Blazegraph and Fuseki outputs,
(\texttt{1},\texttt{*.com}) and (\texttt{5},-).
 In defense of these systems, let us recall that 
the specification is not clear or precise enough about this evaluation.



We ---the Semantic Web advocates--- are in a problem:
No query language with such uncertainties will gain wide adoption. Of
course the example is highly unnatural but, as we will show, it codes
the essence of the problems of substitution of correlated variables in
SPARQL.  Substituting is not trivial even when all values are
constants, but when considering incomplete information, e.g.  in the
form of nulls or unbounds in SPARQL, the complexities rapidly scale
up.  In this setting, having a clean logical picture of what is going
on is crucial.

Our aim is that the above problems can be modeled by using Datalog.
Hence, we introduce Nested Datalog, an extension of classical Datalog
to cope with nested expressions.  Nested Datalog is used to describe
two different substitution philosophies: {\em syntactic substitution},
which implies that the value to be substituted in a correlated
variable comes from one source; and {\em logical} substitution, which
implies that a variable is expecting values from two sources (from the
valuation of the expression in the database and from the outer query).
  
  The problem becomes more complex in the logical 
  evaluation when one (or both) of the values is null.
  It turns out that it is not indifferent where we 
  ``compatibilize" both values (at the bottom of the 
  program, at the top, in the middle).
    To model this it is needed one step further in the
    formalization in order to capture the nuances of 
    incomplete information in the form of null values.
To do this, we use {\em Modal Datalog}, a version of Datalog with modal features.
 
Once having the right setting, the subtleties of the notion of
substitution in nested expressions having incomplete information (null
values) becomes visible. Then, in Section 5, we show how SPARQL
translates to this setting, and how the notion of substitution is
expressed in it. Then we show the discrepancies of the different
implementations and what they mean under the light of this formal
framework.  We conclude with our view of how to handle this problem.

In summary, the contributions of this paper are the following:

\begin{enumerate}
\item Formalization of the problem (i.e. substitution in nested
  expressions), its study from a logical point of view, and analysis
  of the discrepancies of evaluation under different implementations
  of SPARQL.

\item To provide a logical framework to understand and formalize the
  notion of substitution in nested expressions in the presence of
  incomplete information for SPARQL.
   
\item Presentation of the logical (and consistent) alternatives
  defined and supported by our logical framework.
\end{enumerate}

\paragraph{Related work.}
Different problems related to the notion of substitution are listed in
the SPARQL specification errata. In a previous
work~\cite{DBLP:journals/corr/HernandezGA16} we have reported some
issues and presented three alternative solutions based on rewriting on
the nested query before the substitution. After that, a W3C community
group\footnote{https://www.w3.org/community/sparql-exists/} was
created to address these issues.  The community started defining
queries and their expected outputs, for two alternative semantics.
The first, proposed by Patel-Schneider and
Martin~\cite{DBLP:conf/semweb/Patel-Schneider16}, and the second
proposed by Seaborne in the mailing list of the community group. None
of these proposals study the problem in a formal framework as we do in
this paper.

The idea of nested queries in Datalog is not new (see
\cite{DBLP:journals/jlp/GiordanoM94} for example).  We introduce
nesting in a different way in order to be more suitable for studying
correlation.  Similarly, null values have been already studied in
deductive databases
(e.g.,~\cite{DBLP:conf/iclp/DongL92,DBLP:journals/tcs/DongL94,DBLP:journals/tois/KongC95}),
but with a focus (on computing certain answers) that is not the goal
of this paper.


\section{Standard Datalog}
\label{sec:standard-datalog}

We will briefly review notions of non recursive Datalog with
equalities and safe negation. For further details
see~\cite{DBLP:books/daglib/0097689}.

\paragraph{Datalog Syntax.}
A \textit{term} is either a variable or a constant.  An \textit{atom}
is either a \textit{predicate formula} $p(t_1,...,t_n)$ where $p$ is a
predicate name and each $t_i$ is a term, or an \textit{equality
  formula} $\filter(t_1 = t_2)$ where $t_1$ and $t_2$ are terms.  A
\textit{literal} is either an atom (a \textit{positive literal} $L$)
or the negation of an atom (a \textit{negative literal} $\neg L$).
For readability we also write $\filter(t_1\neq t_2)$ instead of
$\neg \filter(t_1 = t_2)$.  A \textit{rule} $R$ is an expression of
the form $L \gets L_1, \dots, L_n$ where $L$ is a predicate formula
called the \textit{head} of $R$, and $L_1, \dots, L_n$ is a set of
literals called the \textit{body} of $R$.  A \textit{fact} $F$ is a
predicate formula with only constants. A \textit{program} $P$ is a
finite set of rules and facts.  A \textit{query} $Q$ is a pair $(L,P)$
where $L$ is a predicate formula called the \textit{goal} of $Q$, and
$P$ is a program.  We assume that all predicate formulas in a query
$Q$ with the same predicate name have the same number of arguments and
that the terms in the head of each rule in $Q$ and the goal of $Q$ are
all variables.

%
%
%

\paragraph{Semantics of Datalog.}
Given a substitution $\theta=\{X_1/a_,\dots,X_n/a_m\}$ from variables
to constants, and a literal $L$, $\theta(L)$ denotes the literal
resulting of substituting in $L$ each occurrence of $X_i$ by $a_i$,
for $1\leq i \leq m$.  Given a set of facts $S$, a substitution
$\theta$, and a positive literal $L$, we say that $S$ models $L$ with
respect to $\theta$, denoted $S,\theta\models L$, if either
$\theta(L)$ is an equality formula of the form $\filter(a=a)$ where
$a$ is a constant, or $L$ is predicate formula and $\theta(L)\in S$.
Similarly, given a set of facts $S$, a substitution $\theta$ and a
negative literal $\neg L$, $S,\theta\models \neg L$, if
$S,\theta\not\models L$.  Given a rule $R=L_0\leftarrow L_1,\dots,L_n$,
a fact $\theta(L_0)$ is inferred in a set of facts $S$ if
$S,\theta \models L_j$, for $1\leq j \leq n$.

A variable $X$ occurs positively in a rule $R$ if $X$
occurs in a positive predicate formula in the body of $R$.
A rule $R$ is said to be \textit{safe} if all its variables occur 
positively in $R$.  A program is \textit{safe} if all its rules
are safe.  The safety restriction provides a syntactic
restriction of programs which enforces the finiteness of derived
predicates.

In this paper we do not consider equality formulas $X=a$
as in most works (e.g., \cite{DBLP:books/daglib/0097689}),
but equality formulas of the form $\filter(X=a)$.
They differ in the form of evaluation. If a rule $R$
has the literal $X=a$ in the body, then $X$ is said to be
defined positively because the equality assigns the value $a$
to $X$. On the contrary, here equality formulas
require that all variables be assigned before being evaluated.
Formally, here equality formulas are built-in.

The \textit{dependency graph} of a Datalog program $P$ is a digraph
$(N,E)$ where the set of nodes $N$ is the set of predicates that occur
in the literals of $P$, and there is an arc $(p_1,p_2)$ in $E$ for
each rule in $P$ whose body contains predicate $p_1$ and whose
head contains predicate $p_2$.  A Datalog program is said to be
\textit{recursive} if its dependency graph is cyclic, otherwise it is
said to be \textit{non-recursive}. On this report we consider only
non-recursive programs.

Let $P$ be a program, $E$ be a subset of the Herbrand base of $P$, and
$\infer(P,E)$ denote the set of facts occurring in $E$ or $P$,
intuitively the facts inferred
in zero steps.  Then, the \textit{meaning} of $P$ is the result of
adding to $\infer(P,E)$ as many new facts as can be inferred from the
rules of $P$ in $\infer(P,E)$.  The inference process is applied
repeatedly until a fixpoint, denoted $\infer^*(P,E)$, is reached. The
\textit{answer} of a query $Q=(L,P)$ in an extensional database $E$,
denoted $\answer_E(Q)$, is the subset of $\infer^*(P,E)$ with facts
having the same predicate as $L$.

The fixpoint depends on the order used to evaluate rules.
Here we assume the order of Stratified Datalog, where for every
arc $(p_1,p_2)$ in the dependency graph of a program $P$, a rule $R_2$
with head $p_2$ is not used in the inference process until every
rule $R_1$ with head $p_1$ cannot be applied to infer another fact.
Without this order a negative predicate formula $\neg L$
can be wrongly evaluated as true if the evaluation is done before
a fact matching $L$ is inferred.

\section{Nested Datalog}
\label{sec:nested-datalog}

We will extend Datalog in order to be able to compose queries, by
introducing nested queries as a new type of atom that occurs as a
filter device (i.e., a built-in).


\begin{definition}[Syntax of Nested Datalog defined recursively]
1. A Datalog query is a Nested Datalog query.
2. A Nested Datalog query is a Datalog query where Nested Datalog
queries are allowed as atoms.
\end{definition}

The inference process of Nested Datalog differs from the standard one
by the addition of the semantics for evaluating Nested queries.
If query $Q$ is an atom in a rule, its evaluation
with respect to  the substitution $\theta$ is true if and only if 
$\theta(Q)$ has at least one answer, where $\theta(Q)$ denotes the query resulting of
``applying'' to $Q$ the substitution $\theta$. This is the key notion
we will study in what follows.

We will need the notion of assignment of a value to a variable.  
Consider a program $P$ with a single rule $R$ defined as
$p(X) \leftarrow q(X),\filter(X=Y)$. The assignment of the value $a$ to the
variable $Y$
can be done by adding a literal
$l(Y)$ to $R$, where $l$ is a fresh predicate, and the fact $l(a)$ to $P$. 
In the resulting program $Y$ can only
take the value $a$. In what follows we will use the notation
$\define(Y=a)$ as a syntactical sugar to denote the result of
assigning $a$ to $Y$ in $R$. 

Substitution $\theta(Q)$ for a nested query $Q$ turns out to be rather subtle.
There are two main approaches that we will call \textit{syntactical} and 
\textit{logical}.

\paragraph{Syntactical substitution.}
It works like standard
replacement of a variable in an open sentence in logic or a free
variable in a programming language. It occurs when a rule cannot be
evaluated without having the value of a ``free'' variable occurring in
it. Formally, a variable $X$ occurs free in a rule $R$ when it does
not occur positively in $R$ and it occurs in an equality formula
or in a nested query $Q$ where recursively $X$ occurs free in a rule
of $Q$.

\begin{definition}[syntactic substitution]
  Given a substitution $\theta$, a program $P$ and a variable $X$
  occurring in $\theta$, the syntactical substitution of $X$ in
  $P$ with respect to $\theta$ is done 
  by adding the literal $\define(X=\theta(X))$ to the body of each
  rule of $P$ where $X$ is free.
\end{definition}


\paragraph{Logical substitution.}  This is the problem of ``substituting''
$\theta(X)$ in a program $P$ that has no ``free'' $X$ (i.e. all their
rules $R$ are ``logically closed'').  Conceptually, in this case the semantics is
one such that after finding a solution of $R$, it checks its ``compatibility'' with
$\theta$.

The essential problem of logical substitution is ``when'' (at what
point in the evaluation) we will test this compatibility.  For
example, consider the program $P$ with the rules
$p(X) \leftarrow q(X)$ and $q(Y) \leftarrow r(Y)$.  The variable $Y$
in the second rule is logically connected with the variable $X$ in the
first rule. Thus, we can alternatively do the substitution in two
places:
\begin{center}
	\begin{tabular}{|c|c|}
		\toprule
		Top-down & Bottom up \\ \midrule
		{\small
			\(
			\begin{array}{rl}
			p(X) &\leftarrow q(X), \define(X=\theta(X))\\
			q(Y) &\leftarrow r(Y)
			\end{array}
			\)
		}
		&
		{\small
			\(
			\begin{array}{rl}
			p(X) &\leftarrow q(X)\\
			q(Y) &\leftarrow r(Y), \define(Y=\theta(X))
			\end{array}
			\)
		}\\
		\bottomrule
	\end{tabular}
\end{center}
This example illustrates two extremes, (1) Top down: evaluate $P$
first, then, proceed to check the compatibility of the solution with
$\theta$.  (2) Bottom up: Check compatibility of $\theta$ (with the
database) before starting the evaluation of $P$.  However, there are
also several valid substitutions in the middle.  In fact, we can start
with a top-down substitution and then move the literals
$\define(X = \theta(X))$ down in the dependency graph of the program.
The method used to move substitutions is equivalent to the standard
method used to move selections $\sigma_{X=a}$ in Relational Algebra
because of optimization concerns.  In the appendix we provide a detailed
description of the process of moving substitutions. The following
result follows from it.

\begin{lemma}
  \label{lemma:equivalence-top-bottom}
  Given a substitution $\theta$ and a Nested Datalog program $P$,
  moving down literals of the form $\define(Y=\theta(X))$
  in the dependency graph of $P$ does not change the semantics of $P$.
\end{lemma}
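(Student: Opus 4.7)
The plan is to mirror the classical relational algebra fact that selections commute with joins, projections, and unions. Since $\theta$ maps variables to constants, the literal $\define(Y=\theta(X))$ behaves exactly like a selection $\sigma_{Y=c}$ for a fixed constant $c=\theta(X)$. I would first decompose any ``move down'' into a sequence of elementary moves, each of which pushes the $\define$ literal from one rule across a single edge of the dependency graph, reducing the lemma to showing that an elementary move preserves $\infer^*(P,E)$ on every predicate.

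For an elementary move, consider a rule $R\colon p(\vec{X}) \leftarrow B_1, \ldots, B_n, \define(Y=c)$ in which $Y$ occurs at some position $j$ of a body literal $B_i = q(\vec{Z})$. Moving $\define(Y=c)$ down replaces $R$ by $R'\colon p(\vec{X}) \leftarrow B_1, \ldots, B_n$, and replaces each rule $S\colon q(\vec{X}') \leftarrow C_1, \ldots, C_m$ defining $q$ by $S'$ obtained by appending $\define(X'_j=c)$ to its body. Unfolding the stratified fixpoint stratum by stratum, the original $R$ only produces $p$-facts whose value at the pushed position equals $c$, while in the modified program the new $S'$-rules restrict $q$ to exactly those facts with $j$-th component $c$, so that $R'$ (without the $\define$) derives the same $p$-facts. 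Hence both programs infer the same $p$-facts, and higher strata are therefore also unaffected. When $q$ is shared with other consumers that should not be restricted, the procedure formalised in the appendix introduces a fresh copy of $q$, exactly as in the standard pushdown of selections on shared subexpressions, so no rule outside the chain being moved is disturbed.

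The main obstacle is handling nested queries. When the literal is pushed across the boundary of a nested subquery, the semantics switches from deriving facts to a Boolean non-emptiness test: a nested query is true under an outer substitution iff its substituted form has at least one answer. I would argue that inserting $\define(Y=c)$ inside the nested query, where $Y$ is bound rather than free, preserves this non-emptiness test, because the only inner solutions that could have contributed under the unmoved version already satisfied $Y=c$, and conversely any inner solution respecting $Y=c$ still witnesses the outer literal. This is precisely the point where the syntactic/logical distinction of substitution matters, and where I would invoke the detailed bookkeeping of the appendix to ensure that the fresh predicate introduced by the $\define$ sugar is scoped correctly with respect to the nested query's free variables and does not accidentally capture or shadow them.
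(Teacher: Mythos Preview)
Your proposal is correct and follows essentially the same route as the paper: both reduce the claim to the standard relational-algebra equivalence of pushing a selection $\sigma_{Y=c}$ through the operators, introducing fresh copies of shared subexpressions so that other consumers of the pushed predicate are not disturbed. The paper's own argument is in fact less detailed than yours---it describes the one-step pushdown procedure in the appendix and then simply asserts that equivalence of one move is ``not difficult to see,'' relying on the reader's familiarity with selection pushdown.

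One small remark: your final paragraph on crossing nesting boundaries addresses a non-issue. The dependency graph of $P$ records only head/body predicate edges within a single program; nested queries appear as built-in atoms and contribute no edges, so the move-down procedure (as defined in the appendix) never pushes a $\define$ literal into a nested subquery. The lemma concerns only the ambient program, and you can safely drop that discussion.
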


Hitherto, we have defined logical substitution for a rule but not for
a whole query. We now will define it using the top-down
approach.\footnote{Other approaches can be obtained by moving the
  substitution point from upper levels to lower levels in the
  dependency graph of the query.}

\begin{definition}[Top-down logical substitution]
  Given a substitution $\theta$, a query $Q=(p(X_1,\dots,X_n),P)$ and
  a variable $X$ occurring in the goal of $Q$ and $\theta$, the
  top-down logical substitution of $X$ in $Q$ is the query resulting
  of replacing the goal of $Q$ by $q(X_1,\dots,X_n)$, where $q$ is a
  fresh predicate, and adding the rule
  $q(X_1,\dots,X_n) \leftarrow p(X_1,\dots,X_n), \define(X=\theta(X))$
  to $P$.
\end{definition}

Logical and syntactical substitutions are not arbitrary. They are motivated
by the EXISTS operator of SPARQL as is shown in the following query.
\begin{lstlisting}[style=sparqld]
SELECT ?X
WHERE { ?X :hasMail ?Y
        FILTER EXISTS { SELECT ?X
                        WHERE { ?X :hasMail ?Z FILTER (?Y <> ?Z) } } }
\end{lstlisting}
Intuitively this query finds people \verb|?X| with multiple emails.
The variable \verb|?X| cannot be substituted by a constant in the
nested query, because that substitution breaks the syntax of the
SELECT clause (where only variables are allowed). Thus, \verb|?X| has
to be substituted logically. On the other hand, \verb|?Y| is a free
variable in the nested query, so \verb|?Y| has to be substituted
syntactically. We claim that these substitutions are better understood
if we rewrite this SPARQL query as the Nested Datalog query
$(p(X,Y),P_1)$ where $P_1$ is the following program:
\begin{align*}
P_1:\quad& p(X,Y) \leftarrow \operatorname{mail}(X,Y), (q(X),P_2),\\
P_2:\quad& q(X) \leftarrow \operatorname{mail}(X,Z), \filter(Z\neq Y).
\end{align*}
Hence, the result of applying a substitution $\theta$ in $P_2$ produces
a program with the rule
$q(X) \leftarrow \operatorname{mail}(X,Z), \filter(Z\neq Y),
\define(X=\theta(X)), \define(Y=\theta(Y))$.

%

There is a third form of substitution, namely \textit{improper}.
To gain some intuition, consider the following SPARQL query:
\begin{lstlisting}[style=sparqld]
SELECT * WHERE { ?X :r ?Y FILTER EXISTS { SELECT ?Z WHERE { ?X :s ?Z } } }
\end{lstlisting}
The inner pattern of this query can be modeled as a query $(p(Z),P)$
where $P$ contains the rule $p(Z) \leftarrow s(X,Z)$.  The variable
$X$ cannot be substituted logically, because it does not occur in the
goal nor in the head of the unique rule of $P$. Furthermore, $X$
cannot be substituted syntactically, because it is not free. However,
some systems assume that $X$ is correlated. The behavior of these
systems coincides with applying improper substitution.  Here,
substitution is done by replacing the unique rule of $P$ by the rules
$p(Z) \leftarrow u(X,Z)$ and
$u(X,Z) \leftarrow s(X,Z), \define(X=\theta(X))$, where $u$ is a fresh
predicate.

Essentially, an improper substitution is a logical substitution that
starts in some point of the nested query instead of the goal of the
query, so that there is a gap in the logical chain. After that point,
substitution is similar to logical substitution in the sense that it
moves the value assigned to $X$ from the head of a rule to the body,
and thus to other rules below in the dependency graph.

In the example, the substitution is improper, because the logical
chain start on the second rule so there is a gap between the goal of
the nested query and the point where the logical chain starts.

Improper substitution breaks the design of Datalog where the scope of
variables is the rule where they occur. It disallows renaming
variables because its scope could be extended beyond the nested query,
so breaking the design of logic and compositional languages where non
free variables are renamed without changing the semantics of the
expression, and free variables cannot be renamed.  In our opinion,
this compromises the compositionality of queries.

Now we have a parametrical definition of $\theta(Q)$ for a query $Q$
(i.e., without the values of a particular substitution $\theta$),
independently of the kind of substitution used
(syntactical, logical or improper)
substitutions are used. A consequence, is the following result.

\begin{lemma}
  Given a Nested Datalog query $Q=(p(X_1,\dots,X_n),P)$ and a simple
  extensional database $E$ for $Q$, there exists a first order formula
  $\phi$ such that each
  $\answer_E(Q)=\{p(t_1,\dots,t_n) \mid E,\phi \models
  p(t_1,\dots,t_n) \}$.
\end{lemma}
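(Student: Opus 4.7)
The plan is to proceed by structural induction on the nesting depth of the query $Q$, leveraging the classical result that non-recursive Datalog with stratified negation is expressible in first-order logic.

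For the base case (nesting depth zero), $Q$ is an ordinary Datalog query and $P$ is a non-recursive, safe, stratified program. Since the dependency graph is acyclic, I can topologically order the intensional predicates and inductively unfold each rule $L_0 \leftarrow L_1,\dots,L_n$ into the formula $\forall \bar{X}\,(L_1\wedge\dots\wedge L_n \to L_0)$, then eliminate the intensional predicates bottom-up by replacing each positive occurrence $q(\bar{t})$ with the disjunction over all rules defining $q$ of the existentially closed body (with body variables not in $\bar{t}$ existentially quantified). Safety guarantees that every variable in a negative literal or an equality filter is bound by a positive predicate atom, so the resulting formula is well-defined and domain-independent. This yields a formula $\phi_Q(X_1,\dots,X_n)$ such that $\answer_E(Q) = \{p(t_1,\dots,t_n) \mid E \models \phi_Q(t_1,\dots,t_n)\}$.

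For the inductive step, suppose every nested query of lower nesting depth can be translated. Consider a rule of $P$ whose body contains a nested query $Q'$ as an atom, appearing under some substitution mechanism (syntactic, logical top-down, or improper). In each of these three cases the mechanism defined earlier in Section 3 produces a closed program $\theta(Q')$ whose free variables are precisely the variables of the enclosing rule that are to be correlated, together with auxiliary $\define(X=\theta(X))$ literals that are merely equality atoms. By the induction hypothesis, $\theta(Q')$ itself translates to a first-order formula $\phi_{Q'}(\bar{Y})$. The semantics given for a nested atom — namely that $\theta(Q')$ is true iff it has at least one answer — then translates to the existential closure $\exists \bar{Z}\,\phi_{Q'}(\bar{Y})$, where $\bar{Z}$ are the head variables of $Q'$ that are not correlated with the outer rule. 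Substituting this existential formula for the nested atom in the body, and then carrying out the base-case unfolding described above, produces the desired $\phi$.

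The main obstacle is bookkeeping for substitution: we must verify that each of the three substitution disciplines yields, when combined with the inductive translation, a first-order formula with the right free variables and the right semantics. In particular, for logical and improper substitution the added $\define(X=\theta(X))$ literals sit inside the nested program but correlate with variables of the outer program, so some care is needed to ensure that their variables are not prematurely captured by the existential quantifier used to translate the nested atom. Lemma~\ref{lemma:equivalence-top-bottom} is helpful here: it lets us normalize every logical substitution to the top-down form, where the correlated variable appears in the head of the outermost rule of the nested program and is therefore naturally preserved as a free variable of $\phi_{Q'}$ rather than bound by the existential quantifier. Once this normal form is in place, the translation goes through uniformly in all three cases.
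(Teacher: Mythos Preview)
Your proposal is correct and follows essentially the same strategy as the paper: reduce to the classical translation of non-recursive stratified Datalog into first-order logic, and handle each nested-query atom recursively by an existential subformula, checking the three substitution disciplines one by one. The only cosmetic difference is that you \emph{unfold} intensional predicates to obtain a single query formula $\phi_Q(X_1,\dots,X_n)$ with $E\models\phi_Q(\bar t)$, whereas the paper keeps the program as a conjunction of universally quantified implications (renaming intensional predicates by fresh names inside each nested scope) and uses the consequence form $E,\phi\models p(\bar t)$ appearing in the statement; the paper then encodes logical substitution by explicit equalities $X_{j_k}=Y_{j_k}$ between outer and renamed inner variables, and encodes syntactic and improper substitution by omitting the relevant $\forall X$ quantifier on the inner rule. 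One small remark: Lemma~\ref{lemma:equivalence-top-bottom} only licenses moving $\define$ literals \emph{down}, not up, so strictly speaking it does not ``normalize to top-down''; but since top-down is the defining form of logical substitution in this paper, no normalization is actually needed and your argument goes through unchanged.
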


\begin{proof}
  It is well known that without nesting each Datalog query $Q$
  corresponds to a such formula $\phi$. In fact, each literal $L$ is
  translated into a first order literal $L*$. Each rule
  $L_0 \leftarrow L_1,\dots,L_n$ is translated as a first order
  formula
  $\forall Z_1\dots\forall Z_n (L_1 \land \dots \land L_n \rightarrow
  L_0)$ where $Z_1,\dots,Z_n$ are the non free variables of the rule.
  And the whole program is translated as the conjunction of the
  formulas corresponding to each rule.

  The proof is done by translating each the atom that we added to
  Datalog into first order formulas, i.e., Nested queries. Suffices
  giving a recursive translation for them. We define it as follows.
  The formula of a nested query $Q=(p(X_1,\dots,X_n),P)$ is the first
  order formula
  $\exists Y_1 \dots \exists Y_n(p'(Y_1,\dots,Y_n) \land
  X_{j_1}=Y_{j_1} \land \dots \land X_{j_m}=Y_{j_m} \land \phi_P)$,
  where $\{j_1,\dots,j_m\}\subseteq\{1,\dots,n\}$ and $\phi_P$ is the
  formula of $P$ after renaming each predicate $q$ in $P$ with a fresh
  predicate $q'$ (this ensures that the evaluation is isolated from
  the outside of the nested query), and $X_{j_1},\dots,X_{j_m}$ are
  the variables that occur positively in the rule where $Q$ is nested.
  The equalities $X_{j_1}=Y_{j_1} \land \dots \land X_{j_m}=Y_{j_m}$
  model the logical substitution. Also, free variables $X$ in rules of
  $P$, are not added in $\forall X$ quantifiers of rules of $P$, so
  they are syntactically substituted.  Improper substitution of a
  variable $X$ in a rule $E$ can be simulated by removing the
  $\forall X$ in the formula of $R$.  \qed
\end{proof}

A corollary of this lemma is that nesting does not add expressive
power to Datalog. Hence, if we chose using one form of substitution
but not the other (e.g., using logical substitution but not
syntactical), we will have the same expressive power than if we had
chosen the contrary, or no substitution.

\section{Modal Datalog}

Modal Datalog is a version of Datalog where each rule is 
labeled with a modal logical operator. In what follows we 
develop it.

Most real world information includes incomplete data.  The main
device to codify incompleteness has been the null values.  A null,
denoted $\bot$, represents either that the value is missing or non
applicable. A relation containing null values is said
\textit{incomplete}, while one without them is said \textit{complete}.
The semantics of an incomplete relation is the set of all possible
complete relations resulting from replacing consequently each null by a
constant \textit{or} a symbol $\top$, denoting a non-applicable value.
We follow the semantics of $\bot$ and $\top$ by
Lerat and Lipski~\cite{DBLP:journals/tcs/LeratL86}.  However, in this
paper we only consider $\bot$ nulls, because the evaluation process of
Modal Datalog never generates non applicable values if they are not
present in the database.

The answer of a query in a complete database $D$ is
characterized by the set $\{\mu \mid D,\mu \models \phi\}$ where $\phi$
is a first order formula whose free variables are instantiated by $\mu$. 
On the other hand, an incomplete database is
interpreted as a set of complete databases $\mathcal{X}$, the 
possible worlds. In this context, we can give
modal characterizations to an answer $\mu$:  $\mu$ 
is \textit{sure}, denoted
$\mathcal{X},\mu\models\Box\phi$, if $D,\mu\models\phi$ for all
$D\in\mathcal{X}$. Similarly, $\mu$ is a \textit{maybe} answer,
denoted $\mathcal{X},\mu\models\Diamond\phi$, if there exists a
database $D\in\mathcal{X}$ such that $D,\mu\models\phi$.

{\em Modal Datalog} essentially introduces a mode for each rule $R$ in
a Datalog program. If the mode of $R$ is $\Box$, then $R$ is said to be sure
and infers facts that are valid for all instances of the null values
occurring in $R$ in the current database. Otherwise, if the mode of
$R$ is $\Diamond$, then $R$ is said to be a maybe rule and infers
facts that are valid for at least one instance of the null values.

\begin{definition}[Modal Datalog Syntax]
  A Modal Datalog Program is a set of rules of the form
  $\circ(p(X_1,\dots,X_n) \leftarrow B)$ where $\circ$ is either
  $\Box$ or $\Diamond$, where the symbol $\bot$ can occur in the body
  of the rule in the same places than terms.
     
  A modal Datalog query is one built with Modal Datalog programs.
\end{definition}

Next we present how modal predicates are derived from sets of facts
and substitutions. We write $S,\theta \models \circ L$ if $L$ is
derived from $S$ and $\theta$ with the label $\circ$ sure or maybe.
We say that a Modal Datalog predicate formula $L_1$ is less
informative than another $L_2$, denoted $L_1 \leq L_2$, if every
instance of $L_2$ is an instance of $L_1$.  Let $L$ be a literal, $S$
be a set of facts, and $\theta$ be a substitution. Then:
\begin{itemize}
\item $S,\theta \models \Box L$ if one of the following
  conditions holds:
  \begin{itemize}
  \item $L$ is a positive predicate formula and $\theta(L) \in S$.
  \item $L$ is a negative predicate formula $\neg q(t_1,\dots,t_m)$ and
    there does not exist a fact $q(a_1,\dots,a_m)$ in $S$ such that
    $\theta \models \Diamond(t_j=a_j)$, for $1 \leq j \leq m$.
  \item $L$ is $\define(X=t)$ and $\theta(X)$ is $t$.
  \item $L$ is $\filter(t_1=t_2)$ or $\filter(t_1 \neq t_2)$ and
    $\theta\models\Box(L)$
  \item $L$ is $(L',P')$ and $(L',\theta(P'))$ has at least one answer.
  \item $L$ is $\neg(L',P')$ and $(L',\theta(P'))$ has no answers.
  \end{itemize}
\item $S,\theta \models \Diamond L$ if one of the
  following conditions holds:
  \begin{itemize}
  \item $L$ is a positive predicate and there exists a fact
    $F \in S$ such that $F\leq\theta(L)$.
  \item $L$ is a negative predicate formula $\neg q(t_1,\dots,t_m)$
    and there does not exist a fact $q(a_1,\dots,a_m)$ in $S$ such that
    $\theta \models \Box(t_j=a_j)$, for $1 \leq j \leq m$.
  \item $L$ is $\define(X=t)$ and $\theta(X)$ is $t$.
  \item $L$ is $\filter(t_1=t_2)$ or $\filter(t_1 \neq t_2)$ and
    $\theta\models\Diamond(t_1 \neq t_2)$.
  \item $L$ is $(L',P')$ and $(L',\theta(P'))$ has at least one answer.
  \item $L$ is $\neg(L',P')$ and $(L',\theta(P'))$ has no answers.
  \end{itemize}
\end{itemize}

\begin{definition}[Semantics of Modal Datalog]
  \label{def:mrd-inference}
  Given a Modal Datalog query $(L,P)$, a database $E$, and a
  set of already inferred facts $S$, a fact $F$ is inferred
  from a rule $\circ (H \leftarrow B)$ in $P$
  ($\circ$ is either $\Box$ or $\Diamond$)
  if and only if there exists a substitution $\theta$ such that
  $S,\theta \models \circ L$ for all literals $L$
  in $B$, $\theta$ is the less informative substitution $\theta'$
  such that $S,\theta' \models \circ L'$ for all positive predicate
  formulas $L'$ in $B$, and $\theta(H)=F$.
\end{definition}

\begin{example}
Let $E=\{r(a),s(\bot),t(\bot)\}$ be a set of facts and $Q=(p(X),P)$ be the
query where the program $P$ has the rules
$R_1:\Box(p(X) \leftarrow q(X),r(X))$ and
$R_2:\Diamond(q(X) \leftarrow s(X),t(X))$.
Then, let us evaluate $\answer_E(Q)$.
The literal $r(X)$ in $R_1$ is only true with the substitution $\{X/a\}$,
because $r(a)$ is the only available fact. To infer $p(a)$ we need
first to infer $q(a)$ from rule $R_2$. Despite rule $R_2$ finds possible
answers, inferring $q(a)$ is not possible
because $R_2$ infers the less informative fact, i.e., $q(\bot)$.
Hence, $\answer_E(Q)$ is empty.
\end{example}

Now we are ready to define the promised notion of substitution in
Modal Datalog. As before, we will make a distinction between syntactical
substitution and logical substitution.

In the case of {\em syntactical substitution}, the approach is the same. 
We add a literal $\define(X=\theta(X))$ to the rule where $X$ occurs free.
As this literal is syntactic sugar for introducing a predicate
formula $u(X)$, then $\define(X=\theta(X))$ is indeed a positive predicate
formula, so $X$ occurs positively in it
(see Def.~\ref{def:mrd-inference}). Then, $X$ is not free anymore, and
takes the value of $\theta(X)$, that can be a null or a constant.

The {\em logical substitution} is more subtle.
 As we saw, in logical substitutions the value
of a variable $X$ comes from more than one source. For example, if one
simply adds the literal $\define(X=\theta(X))$ to the body of the rule
$\circ(p(X)\leftarrow q(X))$ ($\circ$ is either $\Box$ or $\Diamond$),
then the value of $X$ will be provided by two sources, namely
$\theta(X)$ and the literal $q(X)$. The problem is what to do
if in one source $X$ is null and in the other a constant.
We solve this problem by splitting the rule in two, so one preserves
the mode $\circ$ of the original rule and in the other we use the
mode $\Diamond$ to merge the values coming from both sources.

The logical substitution of $X$ in a rule
$R=\circ(p(X_1,\dots,X_n) \leftarrow B)$
is the replacement of $R$ by the rules
$\Diamond(p(X_1,\dots,X_n) \leftarrow
u(X_1,\dots,X_n),\define(X=\theta(X)))$ and
$\circ(u(X_1,\dots,X_n) \leftarrow B)$, where $u$ is a fresh
predicate.
 The mode $\Diamond$ in the first rule checks the compatibility
and follows the SPARQL design, where a null value
is considered compatible with a constant. 

We showed that in Nested Datalog logical substitutions can be moved down in
the dependency graph of a program without changing its results
(Lemma~\ref{lemma:equivalence-top-bottom}). The following Lemma
states that this feature does not hold in Modal Datalog.

\begin{lemma}
  \label{lemma:no-equivalence-top-bottom}
  Given a substitution $\theta$ and a Modal Datalog program $P$,
  moving down literals of the form $\define(Y=\theta(X))$
  in the dependency graph of $P$ could change the semantics of $P$.
\end{lemma}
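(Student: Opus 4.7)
The plan is to prove the lemma by exhibiting a concrete counterexample: a Modal Datalog program whose answer strictly changes when a single $\define$ literal is pushed down one level in the dependency graph. The guiding intuition is that the proof of Lemma~\ref{lemma:equivalence-top-bottom} implicitly uses that equality behaves uniformly across rules in standard Datalog, whereas in Modal Datalog the same literal $\define(X=a)$ is evaluated under either the $\Box$ or the $\Diamond$ mode of its host rule, and these modes treat nulls very differently: $\Box$ requires exact membership in $S$, while $\Diamond$ only requires the $\leq$ ordering. Placing $\define$ inside a $\Box$ rule therefore demands an exact match that a $\Diamond$ rule would have waived.

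Concretely, I would take the program with rules $R_1:\Diamond(p(X) \leftarrow q(X),\define(X=a))$ and $R_2:\Box(q(X) \leftarrow r(X))$, extensional database $E=\{r(\bot)\}$, and goal $p(X)$. First I would evaluate $R_2$ with $\theta=\{X/\bot\}$: under $\Box$, the requirement is $\theta(r(X))=r(\bot)\in E$, which holds, so $q(\bot)$ is inferred. Then I would evaluate $R_1$ with $\theta=\{X/a\}$: under $\Diamond$, the positive predicate $q(X)$ is satisfied because $q(\bot)\leq q(a)$, the literal $\define(X=a)$ is satisfied because $\theta(X)=a$, and $\{X/a\}$ is indeed the least informative such substitution (mapping $X$ to $\bot$ would falsify $\define$). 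Hence $p(a)$ is inferred and the answer is $\{p(a)\}$.

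Next I would move the $\define$ literal one step down in the dependency graph, obtaining $R_1':\Diamond(p(X)\leftarrow q(X))$ and $R_2':\Box(q(X)\leftarrow r(X),\define(X=a))$. To fire $R_2'$ one would need a $\theta$ satisfying both $r(X)$ and $\define(X=a)$ under $\Box$; the $\define$ literal forces $\theta(X)=a$, and then $\Box$ demands $\theta(r(X))=r(a)\in E$, which fails since only $r(\bot)\in E$. Thus $R_2'$ never fires, $q$ remains empty, $R_1'$ is therefore starved of witnesses, and the answer collapses to $\emptyset$. This discrepancy establishes the lemma.

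The main obstacle is really a bookkeeping one: one must verify that the minimality condition on $\theta$ imposed by Def.~\ref{def:mrd-inference} is respected in both evaluations and that no alternative substitution could salvage the bottom-up version (in particular, checking that $\theta=\{X/\bot\}$ does not accidentally satisfy $R_2'$ via some relaxation of $\define$). Once that is confirmed, the counterexample isolates exactly the asymmetry between $\Box$ and $\Diamond$ with respect to nulls, which is precisely the feature absent from standard Datalog and thus the reason Lemma~\ref{lemma:equivalence-top-bottom} fails to carry over.
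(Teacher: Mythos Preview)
Your proof is correct. Both you and the paper proceed by exhibiting a witnessing counterexample that exploits the asymmetry between $\Box$ and $\Diamond$ in how positive predicate atoms with $\bot$ are matched, but the concrete constructions differ.

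Your example is more elementary: a single variable, two rules with fixed modes $\Diamond$ over $\Box$, and you literally push the $\define$ literal one level down following the appendix procedure. The discrepancy then comes purely from the fact that $q(\bot)\leq q(a)$ licenses the match in the $\Diamond$ rule upstairs, whereas $\Box$ downstairs demands $r(a)\in E$ exactly. The paper instead starts from a single $\Box$ rule $\Box(p(X,Y)\leftarrow r(X),s(Y),\filter(X=Y))$ over $E=\{r(\bot),s(a)\}$ with $\theta=\{X/a\}$ and contrasts two placements of the Modal-Datalog logical substitution itself: Program~A inserts the $\Diamond$ wrapper above the rule (top-down), Program~B inserts it around the $r(X)$ atom (bottom-up). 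In~A the inner $\Box$ rule already fails on $\filter(X=Y)$ because $\Box(\bot=a)$ is false, so nothing is produced; in~B the $\Diamond$ wrapper first ``upgrades'' $r(\bot)$ to $u(a)$, after which the outer $\Box$ rule succeeds.

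What each buys: your example is cleaner and matches the lemma's phrasing (moving the \emph{literal}) more literally, isolating the $\Box$/$\Diamond$ membership test as the sole cause. The paper's example additionally shows the phenomenon in the precise shape produced by the Modal-Datalog logical-substitution operator (where a fresh $\Diamond$ rule travels with the substitution point) and exercises the $\filter$ clause as well, which is closer to the SPARQL patterns analysed later in Section~5. Either counterexample suffices for the lemma.
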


\begin{proof}
  It suffices to show an example that witnesses this.
  Consider the query $Q=(p(X,Y),P)$
  where $P$ is the unique rule
  $\Box(p(X,Y) \leftarrow r(X), s(Y), \filter(X=Y)$. 
  \setlength\tabcolsep{0.3em}
  \begin{center}
	\begin{tabular}{|c|c|}
		\toprule
		Program A & Program B \\ \midrule
		{\small
			\(
			\begin{array}{rl}
			\Diamond(p(X,Y) &\leftarrow u(X,Y), \define(X=\theta(X)))\\
			\Box(u(X,Y) &\leftarrow r(X), s(X), \filter(X=Y))
			\end{array}
			\)
		}
		&
		{\small
			\(
			\begin{array}{rl}
			\Box(p(X,Y) &\leftarrow u(X), s(Y), \filter(X=Y)) \\
			\Diamond(u(X) &\leftarrow r(X), \define(X=\theta(X)))\\
			\end{array}
			\)
		}\\
		\bottomrule
	\end{tabular}
  \end{center}
  In a database containing the facts $r(\bot)$ and $s(a)$,
  and $\theta=\{X/a\}$, program A  will have no
  solutions and program B will have the solution $p(a,\bot)$. \qed
\end{proof}

The example in the previous proof shows:

\begin{corollary}
In Modal Datalog
bottom-up and top-down evaluations do not behave in the same manner.
\end{corollary}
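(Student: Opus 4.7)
The plan is to deduce the corollary directly from Lemma~\ref{lemma:no-equivalence-top-bottom} together with the concrete witness exhibited in its proof. The corollary is in effect a rephrasing of the lemma: top-down evaluation corresponds to placing the logical substitution literal at the top of the dependency graph of $P$ (near the goal), while bottom-up evaluation corresponds to pushing that literal as far down as possible (onto the rules that define the base predicates). Hence a single program whose top-down and bottom-up rewritings compute different answers on some database is enough.

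First I would identify Program A from the proof of Lemma~\ref{lemma:no-equivalence-top-bottom} as the top-down variant --- the $\define(X=\theta(X))$ literal sits in the outermost $\Diamond$ rule, next to the goal --- and Program B as the bottom-up variant, obtained by sliding $\define(X=\theta(X))$ down the dependency graph onto the rule that defines the auxiliary predicate involving $r(X)$. I would then re-use the witnessing data $\{r(\bot),s(a)\}$ with $\theta=\{X/a\}$ already shown in the lemma to separate A from B: in A the inner $\Box$ rule produces no sure inference matching both $r$ and $s$ at a single constant, so the answer set is empty; in B the $\Diamond$ rule for $u$ fires at $X=a$ by compatibility of $r(\bot)$ with $a$, and the outer $\Box$ step yields $p(a,\bot)$.

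The only step requiring a little care --- and the one I would state most explicitly --- is the identification between the informal notions ``top-down'' and ``bottom-up'' evaluation and the syntactic placement of $\define(X=\theta(X))$ in Programs A and B. Once this correspondence is pinned down, the discrepancy between the two answer sets established in the lemma transfers verbatim and yields the corollary, with no further computation needed. I do not anticipate any real obstacle: the heavy lifting has already been done in Lemma~\ref{lemma:no-equivalence-top-bottom}, and the corollary amounts to reinterpreting that example operationally.
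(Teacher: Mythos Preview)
Your proposal is correct and matches the paper's own argument: the paper states the corollary with nothing more than ``The example in the previous proof shows,'' and you do exactly this, identifying Program~A with the top-down placement of $\define(X=\theta(X))$ and Program~B with the bottom-up placement, and reusing the separating instance $\{r(\bot),s(a)\}$, $\theta=\{X/a\}$. The only addition you make beyond the paper is spelling out the top-down/bottom-up correspondence explicitly, which is harmless and arguably clarifying.
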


\section{Substitution in FILTER EXISTS expressions}

By using the machinery of Modal Datalog, we will model the evaluation of FILTER EXISTS expressions in SPARQL. 
Our objective is to provide a framework for safe semantics for correlated subqueries.


\subsection{SPARQL codified as Modal Datalog}

First we show that SPARQL can be coded in Modal Datalog using a translation
inspired by~\cite{DBLP:conf/semweb/AnglesG08}, that is, via relational algebra. The syntax and semantics of the SPARQL fragment studied here are defined using Relational Algebra with set semantics (as is done in \cite{cyganiak2005relational} and \cite{angles2016}).

We write $r(R)$ to denote a relation $r$ with schema (attributes) $R$, that is,
a set of mappings $\mu$ from $R$ to constants or null values.
We extend standard relational algebra to handle null values by using 
modal evaluation when needed:
\begin{align*}
\pi_{X_1,\dots,X_n}(r) &= \{ \mu[X_1,\dots,X_n] \mid \mu \in r \}, \\
\rho_{X/Y}(r) &= \{ \mu[X/Y] \mid \mu \in r \}, \\
r \cup s &= (r \times \bot^{S\setminus R}) \cup (s \times \bot^{R\setminus S}),\\
\sigma_{\Box\phi}(r) &= \{ \mu \in r \mid \mu \models \Box \phi \},\\
r -_{\Box} s &= \{ \mu \in r \mid \nexists \mu' \in s \forall X \in R\cap S : \Diamond(r[X] = s[X]) \},\\
r \Join_\Diamond s &= \{ \mu_1 {^\frown} \mu_2 \mid \mu_1 \in r, \mu_2 \in s,
\text{ and } \forall X \in R \cap S : \Diamond(r[X] = s[X]) \},
\end{align*}
where $r(R),s(S)$ are two relations; $\mu[T]$ is the truncation of the
tuple $\mu$ to the set of attributes $T$; $\mu[X/Y]$ is the renaming
of the attribute $X$ by $Y$; $\mu_1 {^\frown} \mu_2$ is the
concatenation of tuples, where $X$ takes the most informative value
for each common attribute $X$, or the available value if $X$ is not
common; and $\bot^T$ is the relation $t(T)$ with a single tuple filled
with null values.

Here we study the fragment of SPARQL composed by the operators
$\sigma_{\Box\phi}$, $\pi_{X_1,\dots,X_n}$, $\rho_{X/Y}$, $\cup$,
$\Join_\Diamond$, $-_\Box$. Note that we have selected one mode
  for each modal operator.
  The difference $-_\Box$ corresponds to the operator MINUS
in~\cite{DBLP:journals/tods/PerezAG09}, referred as DIFF
in~\cite{DBLP:conf/amw/AnglesG16}.\footnote{The standard MINUS is
  slightly different in the case when the subtracting mapping has no
  attributes, but both can be mutually simulated (see
  \cite{DBLP:conf/kr/KontchakovK16,DBLP:conf/amw/AnglesG16}).}  The
fragment where the operands of $\cup$ have the same attributes
precludes the emergence of nulls when evaluating databases without
nulls, thus coincides with Relational Algebra.
Otherwise, the extended algebra is required.  In this context, the
operator OPTIONAL, denoted here as $\LeftOuterJoin$, is defined in
\cite{DBLP:journals/tods/PerezAG09} as
$R \LeftOuterJoin S = (R \Join_\Diamond S) \cup (R -_\Box
S)$.\footnote{This definition of $\LeftOuterJoin$ is slightly
  different with such stated by the standard.  However, it is well
  known that the standard SPARQL operators are definable in the
  algebra presented here (e.g., see
  \cite{DBLP:conf/kr/KontchakovK16,DBLP:conf/amw/AnglesG16}).}

\begin{definition}[From algebra to Modal Datalog]
\label{def:algebra-to-mrd}
Given two relations $r(R)$ and $s(S)$, the Modal Datalog rules for
each algebraic operator are defined as follows:
\begin{align*}
\sigma_{\Box\phi}(r) &:\quad
\begin{aligned}[t]
&\Box(p(R) \leftarrow r(R),\filter(\phi))
\end{aligned}
\\
\pi_T(r) &:\quad
\begin{aligned}[t]
\Diamond(p(T) \leftarrow r(R))
\end{aligned}
\\
\rho_{X/Y}(r) &:\quad
\begin{aligned}[t]
\Diamond(p((R \setminus \{X\}) \cup \{Y\}) \leftarrow r(R),\filter(X=Y),\define(Y=\bot))
\end{aligned}
\\
r \cup s &:\quad
\begin{aligned}[t]
&\Diamond(p(R \cup S) \leftarrow r(R),\define(X_1=\bot),\dots,\define(X_n=\bot))  \text{ and}\\
&\Diamond(p(R \cup S) \leftarrow s(S),\define(Y_1=\bot),\dots,\define(Y_m=\bot)), \\
&\text{where }\{X_1,\dots,X_n\}=S\setminus R
\text{ and }\{Y_1,\dots,Y_n\}=R\setminus S
\end{aligned}
\\
r \Join_\Diamond s &:\quad
\begin{aligned}[t]
\Diamond(p(R\cup S) \leftarrow r(R),s(S))
\end{aligned}
\\
r -_\Box s &:\quad
\begin{aligned}[t]
\Box(p(R) \leftarrow r(R),\neg q(R\cap S)) \text{ and }
\Box(q(R\cap S) \leftarrow s(S))
\end{aligned}
\end{align*}
\end{definition}

The translation of each algebraic operator into a set of Modal Datalog
rules by Def.~\ref{def:algebra-to-mrd} allows translating 
algebraic expressions into Modal Datalog queries.

\begin{example}
  Given the relations $r(X,Y)$, $s(X,Y)$ and $t(X,Z)$, the
  expression $(r \Join_\Diamond (s -_\Box t))$ is translated as the
  query $(p(X,Y),P)$ where the program $P$ has the
  rules $\Diamond(p(X,Y) \leftarrow r(X,Y), q(X,Y))$,
  $\Box(q(X,Y) \leftarrow s(X,Y), \neg u(X))$, and
  $\Box(u(X) \leftarrow t(X,Z))$.
\end{example}

\begin{lemma}
  Let $Q$ be an algebra expression and $Q^*$ be the Modal Datalog
  query obtained from $Q$ according to Def.~\ref{def:algebra-to-mrd}.
  For every database $E$, it holds that $Q$ and $Q^*$ are equivalent,
  i.e. their evaluation results in the same answers.\footnote{Note
    that answers of $Q$ have the form $(a,b)$ while answers of $Q^*$
    have the form $p(a,b)$. In this lemma we assume that these answers
    are the same as they have the same components.}
\end{lemma}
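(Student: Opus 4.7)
The natural plan is structural induction on the algebraic expression $Q$, using the inductive hypothesis that every proper sub-expression is already equivalent to its Modal Datalog translation. The base case is a relation name $r(R)$: the extensional facts of $r$ coincide with the Herbrand base for the corresponding predicate, so the answers match trivially. For the inductive step one treats each of the six constructors of Def.~\ref{def:algebra-to-mrd} in turn and shows that, for any database $E$, a tuple $\mu$ lies in the algebraic evaluation of $Q$ iff $p(\mu)$ is inferred from the rules of $Q^*$ under Def.~\ref{def:mrd-inference}.

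For each operator the argument has the same shape: an answer $\mu$ in the algebra provides a witnessing substitution $\theta$ for the head of the translated rule (with $\theta(H)=p(\mu)$), and conversely any $\theta$ satisfying the modal inference conditions yields an algebraic answer. The easy cases are $\pi_T$, $\sigma_{\Box\phi}$ and $\Join_\Diamond$, which are one-rule translations: the modal body literals faithfully encode the corresponding algebraic side conditions. For $\cup$, I would check that the two rules, together with the $\define(\,\cdot=\bot)$ literals, realize the padding semantics built into the extended union $r\cup s=(r\times\bot^{S\setminus R})\cup(s\times\bot^{R\setminus S})$. For $\rho_{X/Y}$, I would argue that $\filter(X=Y)$ together with $\define(Y=\bot)$ correctly transfers the value of $X$ to $Y$: the $\define$ makes $Y$ safe and initially null, the $\filter$ forces $Y$ to agree with $X$, and then the $\Diamond$ head mode produces the renamed tuple.

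The delicate case is $-_\Box$, whose algebraic definition
\[
r-_\Box s \;=\; \{\mu\in r \mid \nexists\mu'\in s\ \forall X\in R\cap S:\ \Diamond(r[X]=s[X])\}
\]
uses a $\Diamond$-equality inside a universal condition and an outer $\Box$ mode. The translation introduces an auxiliary predicate $q$ defined by a $\Box$ rule from $s$, and then derives $p$ via $\Box(p(R)\leftarrow r(R),\neg q(R\cap S))$. Here I would invoke the clause of Def.~\ref{def:mrd-inference} for $\Box$ on a negative literal, which requires that no fact in $q$ be $\Diamond$-compatible with the candidate tuple. Matching this clause against the algebraic definition is where the proof actually has content, since the two must agree on how nulls in either side of the difference interact with the compatibility test.

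The main obstacle is therefore bookkeeping around the $\Box/\Diamond$ modes: ensuring that the single mode attached to every translated rule, together with the mode-dependent semantics of $\filter$, $\define$ and negative literals, reproduces the mixed $\Box/\Diamond$ conditions used on the algebraic side (most visibly in $-_\Box$ and in the compatibility clause of $\Join_\Diamond$). I would organize the argument around an auxiliary claim stating that for every translated rule and every candidate substitution $\theta$, the modal inference condition of Def.~\ref{def:mrd-inference} is equivalent to the algebraic membership condition of the corresponding operator; composing these equivalences along the syntax tree gives the overall equivalence of $Q$ and $Q^*$.
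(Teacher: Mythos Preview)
The paper does not supply a proof for this lemma: it is stated immediately after Definition~\ref{def:algebra-to-mrd} and the accompanying example, and the text then moves directly to Section~\ref{sec:engines}. There is therefore no argument in the paper to compare your proposal against.

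Your plan---structural induction on the algebraic expression, with a case split over the six constructors of Definition~\ref{def:algebra-to-mrd}---is the natural route and is what one would expect a full version to contain. Your identification of $-_\Box$ as the case with real content (matching the $\Box$-mode clause for negative literals in Definition~\ref{def:mrd-inference} against the $\Diamond$-compatibility quantifier in the algebraic definition) is on target, and the auxiliary claim you propose is the right organizing device.

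One caution on a step you treat as routine: your handling of $\rho_{X/Y}$ does not go through as written. Under the semantics of Definition~\ref{def:mrd-inference}, the literal $\define(Y=\bot)$ forces $\theta(Y)=\bot$ outright (it is not an ``initial'' value that a later $\filter$ can overwrite), and in $\Diamond$ mode the filter $\filter(X=Y)$ is then vacuously satisfied for any value of $X$; the inferred head therefore carries $\bot$ at position $Y$, not the value of $X$. If you actually carry out the inductive step for renaming you will have to reconcile this with the algebraic semantics of $\rho_{X/Y}$, and you may find that either an additional convention is being tacitly assumed or that the translation in Definition~\ref{def:algebra-to-mrd} needs adjustment. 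This does not undermine your overall strategy, but it is a place where the hand-waved step fails and the lemma, as the paper states it, would require more work to verify.
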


\subsection{Modal Datalog semantics of FILTER EXISTS}
\label{sec:engines}

Hitherto, we have a semantics for a SPARQL fragment and a translation
to Modal Datalog, except for expressions of the forms $\sigma_Q(P)$
and $\sigma_{\neg Q}(P)$, where $P$ and $Q$ are called respectively the
outer and the inner patterns.\footnote{In the standard
  syntax these operators correspond to ($P$ FILTER EXISTS $Q$) and
  ($P$ FILTER NOT EXISTS $Q$), respectively.}

The philosophy of these operators is the following.  Given a relation
$r$ and an algebraic expression $Q$, we have that $\sigma_Q(r)$ and
$\sigma_{\neg Q}(r)$ return the set of tuples $\mu$ where $\mu(Q)$ has
a solution or no solutions, respectively.  According to the
SPARQL specification, $\mu(Q)$ is the result of replacing in $Q$ each
variable $X$ in the domain of $\mu$ by $\mu(X)$.  As we indicated in
the introduction, this definition is ambiguous and contradictory with
other parts of the specification, and (as expected) systems have
different interpretations for it.

We will unveil this problem by showing how the
definition of $\mu(Q)$ is viewed in Modal Datalog where the
notion of substitution shows up in a clean logical manner.

\begin{definition}[Filter Exists]
Given a relation $r$ and a SPARQL query $Q$, 
the expressions $\sigma_Q(r)$ and $\sigma_{\neg Q}(r)$ 
are translated to the Modal Datalog rules 
$\Box(p(T_r) \leftarrow r(T_r), (L,P))$ 
and
$\Box(p(T_r) \leftarrow r(T_r), \neg(L,P))$
respectively.
\end{definition}

Now we are ready to enumerate three sources of discrepancy in the interpretation
of a rule $R=\circ (H \leftarrow B,(L,P))$:\footnote{Due to space 
limitations, here we consider only the positive case $\sigma_{Q}(r)$.
  It is not difficult to extend the results for the negative case
  $\sigma_{\neg Q}(r)$.}
\begin{enumerate}
\item \textit{Free variables.}
  A free variable $X$ in $P$ is in some cases assumed
  uncorrelated. The lack of correlation of a variable is simulated
  by replacing $R$ by the rule
  $\circ (H \leftarrow B,(L,P'),\define(X'=\bot))$,
  where $X'$ is a fresh variable and $P'$ is the result of
  replacing each occurrence of $X$ in $P$ by $X'$.
\item \textit{Improper substitution.}
  Some engines implement improper substitution.
\item \textit{Substitution level.}
A variable $X$ in the goal of the nested query can be substituted
logically in different places, ranging from top-down to bottom-up
logical substitution. These substitutions are not equivalent
(see Lemma~\ref{lemma:no-equivalence-top-bottom}).
\end{enumerate}

\newcommand{\persons}{\operatorname{persons}}

Next we present example queries to show the ways the systems address the above sources of discrepancy.
We will use the dataset presented in the introduction.
We assume that $\persons(X)$ is a 
pattern giving all persons $X$ in the database, and $m_1(X,Y)$ and
$m_2(X,Y)$ are 
patterns returning respectively the corporate
and personal emails $Y$ of a person $X$. 

\paragraph{The first two discrepancies.}
Consider the queries
$\sigma_{m_2(X,Z)}(T)$,
$\sigma_{\pi_Z(m_2(X,Z))}(T)$ and
$\sigma_{\sigma_{Y=X}(m_2(Y,Z))}(T)$,
where $T$ is $m_1(X,\text{\tt*.com})$.
These queries can be simulated with the same Modal Datalog query
having goal $p(X)$ and a single rule
$\Box(p(X)\leftarrow m_1(X,b),Q)$, where $Q$ is either
$Q_1 = (q(X,Z),\Box(q(X,Z) \leftarrow m_2(X,Z)))$,
$Q_2 = (q(Z),\Box(q(Z) \leftarrow m_2(X,Z)))$, or
$Q_3 = (q(Y,Z),\Box(q(Y,Z) \leftarrow m_2(Y,Z),\filter(Y=X)))$.
Now, we have the following cases:
\vspace{-0.5em}
\begin{itemize}
\item Case $Q=Q_1$: It has a unique alternative which is substituting
  $X$ logically, so returning only person 1. All systems agree with
  this answer.
\item Case $Q=Q_2$: It has two interpretations, namely allowing or not
  allowing improper substitution. In the first, answers include only
  person 1. In the second answers are persons 1 and 3.  Blazegraph and
  Fuseki agree with the first interpretation, while rdf4j and Virtuoso
  with the second.
\item Case $Q=Q_3$: It has two interpretations, depending whether $X$
  is assumed correlated or not.  In the first interpretation, person 1
  is the unique answer.  In the second interpretation, there are no
  answers because $X$ is $\bot$ when evaluating $\filter(Y=X)$.
  Blazegraph, rdf4j and Virtuoso agree with the first interpretation,
  while Fuseki with the second.
\end{itemize}

\begin{figure}[t]
	\begin{tikzpicture}[level 4/.style={sibling distance=0.2cm},level 1/.style={sibling distance=1.3cm},level distance=3.7em]
	\node at (0.8,0) {$(p(X,Y),P)$}
	child {
		node { $\Box\frac{q(X,Y),\filter(Y=\,\text{\tt*.com})}{p(X,Y)}$}
		child { node (E) {$\Box\frac{r(X,Y)}{q(X,Y)}$} }
		child [missing]
		child [missing]
		child [missing]
		child [missing]
		child {
			node (G) {$\Box\frac{s(X),w(Y)}{q(X,Y)}$}
			child {
				node (C) {$\Box\frac{u(X),\neg t(X)}{s(X)}$}
				child [missing]
				child [missing]
				child {
					node (T) {$\Box\frac{v(X,q,Y)}{t(X)}$}
					child { node(A) {$\Box\frac{m_2(X,Y)}{v(X,Y)}$}}
				}
			}
			child [missing]
		}
	}
	;
	\node (B) [left=0.1cm of A] {$\Box\frac{\persons(X)}{u(X)}$};
	\node (D) at (T -| E) {$\Diamond\frac{u(X),v(X,Y)}{r(X,Y)}$}
	child { node {$\Box\frac{\persons(X)}{u(X)}$}}
	child [missing]
	child { node {$\Box\frac{m_2(X,Y)}{v(X,Y)}$}};
	\node (F) [right=0.1cm of A] {$\Box\frac{\define(Y=\bot)}{w(Y)}$};
	\draw[dotted] (-6,-0.7) -- (6,-0.7); \node at (-5.5,-0.5) {Level 1};
	\draw[dotted] (-6,-1.9) -- (6,-1.9); \node at (-5.5,-1.7) {Level 2};
	\draw[dotted] (-6,-5.5) -- (6,-5.5); \node at (-5.5,-5.3) {Level 3};
	\draw (C) -- (B);
	\draw (E) -- (D);
	\draw (G) -- (F);
	\node at (1,-0.5) {\scriptsize 1};
	\node at (-1.2,-1.7) {\scriptsize 2};
	\node at (2.9,-1.7) {\scriptsize 3};
	\node at (-3.3,-5.3) {\scriptsize 4};
	\node at (-1.6,-5.3) {\scriptsize 5};
	\node at (2,-5.3) {\scriptsize 6};
	\node at (3.8,-5.3) {\scriptsize 7};
	\node at (5.3,-5.3) {\scriptsize 8};
	\end{tikzpicture}
	\caption{Dependency graph of the Nested Datalog query for the inner
		pattern $Q$.}
	\label{fig:dependency-tree}
\end{figure}
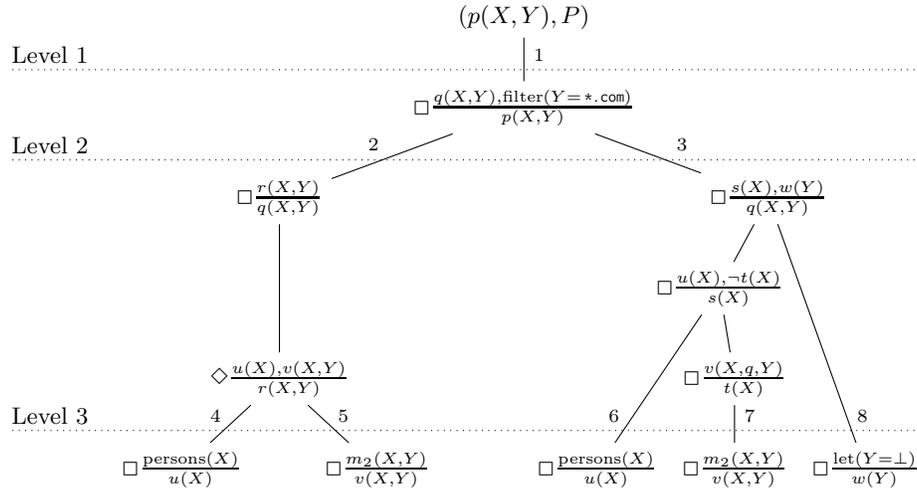

\vspace{-1.5em}
\paragraph{The last discrepancy.}
To check the application of logical substitution, consider the query
\(\sigma_Q(\persons(X)\LeftOuterJoin(m_2(X,Y)))\) where $Q$ is
$\sigma_{Y=\text{\tt*.com}}(\persons(X)\,\LeftOuterJoin\,m_2(X,Y))$,
corresponding to the one presented in the introduction.
Fig.~\ref{fig:dependency-tree} depicts the dependency graph of
$Q$.  
Some edges are labeled to refer alternative places where a
substitution $\theta$ can be applied. 
For instance, the top-down approach consists in inserting the rule
$\Diamond(p(X,Y) \leftarrow
p'(X,Y),\define(X=\theta(X),\define(Y=\theta(Y)))$ in the edge 1, and
replacing $p$ in the rule below by $p'$, where $p'$ is a fresh
predicate.  
Hence, only person 5 succeeds. No system agrees with this evaluation.

If substitutions are done in Level 2, then persons 3 and 5 succeed.
Only rdf4j agrees with this interpretation.

Blazegraph and Fuseki agree with substitution in Level 3, i.e., just
after variables instantiation. 
Strictly, this approach is not logical as it has some improper substitutions. 
Indeed, the logical chain of variable $Y$ below the edge 7 
does not start in the goal of the nested query. 
Here persons 1, 3 and 5 succeed.

Virtuoso, Blazegraph and Fuseki follow a similar approach 
but they also append the literals $\define(X=\theta(X))$ and
$\define(Y=\theta(Y))$ to the rule below edge 1. 
Hence, person 5 is discarded by the filter on this rule.

\section{Conclusions}

This work shows that the notion of substitution continues to haunt
researchers and developers. We showed that it is at the core of the
subtle problems that the specification and implementations of SPARQL
face regarding subqueries of the form FILTER EXISTS.

We think the lasting contribution of this paper is the finding of
several types of substitution in the presence of nested expression and
incomplete information, that are playing some roles and seems that had
passed unnoticed until now.

Although we consciously did not advance any proposal to fix the
problems of substitution in nested expressions in SPARQL, the paper
leaves a chart with the possible avenues to solve them. We think that
the Working Groups of the W3C have the last word on this issue.

\bibliographystyle{plain}

\newpage

\appendix
\section{Moving logical substitutions to lower levels}

In this appendix we will describe how a substitution literal
$\define(X=\theta(Y))$ can be moved to lower levels of the
dependency graph without changing the semantics of a Nested
Datalog query. Because Nested Datalog programs $P$ are non
recursive, the structure of the dependency graph of $P$ is a
tree, so substitution literals can be moved down repeatedly
until the leaf predicates are reached.

We need another structure.
Given a program $P$, then we call the tree of $P$ to the rooted
tree whose nodes are labeled with the rules of $P$ and where a
node labeled rule $R_1$ is the child of a node labeled by
a rule $R_2$ if the head of $R_1$ is in the body of $R_2$.

Now, we will describe how to move down an literal $\define(Y=\theta(X))$
in a rule $R$ of a program $P$ where this literal occurs. Without
loss of generality, consider that $R$ is the following rule:
\[H \leftarrow L_1,\dots,L_n, \define(Y=\theta(X)),L'_1,\dots,L'_m,\]
where $L_1,\dots,L_n$ are positive predicates formulas with
intensional predicates where $Y$ occurs and $L'_1,\dots,L'_m$
are the rest of the literals in the body of $R$.

For $1 \leq i \leq n$, let $p(X_1,\dots,X_n)$ be $L_i$ and
$J_i$ be the set of positions $j$ of $L_i$ such that $X_j=Y$
(i.e., are the same variable). Then, rename $p$ in $L_i$ with
a fresh predicate name $p_i$ and do the following for each rule
$R'$ in $P$ of the form $p(Z_1,\dots,Z_n) \leftarrow B$:
\begin{enumerate}
\item Let $T$ be a copy of the thread whose root is $R'$ in the
  tree of $P$, and $T'$ the result of renaming consequently   
  all intentional predicates occurring in $T$ by a fresh predicate,
  except $p$ that is renamed as $p_i$.
\item Append the literal $\define(Z_j=\theta(X))$ to the
  body of the root of $T'$, for each position $j$ in $J_i$.
\item Copy the rules of $T'$ into $P$.
\end{enumerate}

\begin{example}
Consider the following program:
\begin{align*}
p(X,Y) &\leftarrow q(X,Y), q(Y,X), \define(X=\theta(X))\\
q(X,Y) &\leftarrow r(X,Y)\\
r(X,Y) &\leftarrow s(X,Y)
\end{align*}
where $s$ is an EDB-predicate.
Then, moving the literal $\define(X=\theta(X))$
one level below results in the following program:
\begin{align*}
p(X,Y) &\leftarrow q(X,Y), q(Y,X)\\
q_1(X,Y) &\leftarrow r_1(X,Y), \define(X=\theta(X))\\
r_1(X,Y) &\leftarrow s(X,Y)\\
q_2(X,Y) &\leftarrow r_2(X,Y), \define(Y=\theta(X))\\
r_2(X,Y) &\leftarrow s(X,Y)
\end{align*}
\end{example}

\begin{definition}[Bottom-up logical substitution]
  Let $\theta$ be a substitution and $Q=(L,P)$ be a Nested  Datalog query.
  Then the bottom-up substitution $\theta(Q)$ is the query
  resulting after applying the top-down substitution of $\theta$ in $Q$
  and then moving down the added literals until they reached
  the leafs of the tree of $P$.
\end{definition}

Is not difficult to see that the process of moving a literal
down once level results in an equivalent program. Hence, we got the
following result.

\begin{lemma}
  Let $Q$ be a Nested Datalog query. Let $\answer^\shortdownarrow$ and
  $\answer^\shortuparrow$ be the respective evaluation procedures
  using top-down and bottom up substitution.  Then, for every
  extensional database $E$ for $Q$ holds
  $\answer_E^\shortdownarrow(Q)=\answer_E^\shortuparrow(Q)$.
\end{lemma}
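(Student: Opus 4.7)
The plan is to reduce the lemma to a single-step local equivalence and then iterate. Concretely, the bottom-up substitution $\theta(Q)$ is obtained from the top-down substitution by applying the one-level transformation described just above the statement (rename the intensional predicates $p$ that carry $Y$ in the body of $R$ to fresh $p_i$, clone the subtree rooted at each rule defining $p$, and append $\define(Z_j=\theta(X))$ at the new roots), a finite number of times. Hence it suffices to prove the following claim: the single-step move of $\define(Y=\theta(X))$ from rule $R$ into the newly-created clones is semantics-preserving, i.e.\ produces a program with the same $\answer_E$ as the original for every extensional database $E$.

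For the single-step claim, I would argue by a direct equivalence on the inference relation $\infer^*$. In one direction, every substitution $\theta'$ that witnesses a fact inferred by the original rule $R$ must set $\theta'(Y)=\theta(X)$ (since $\define(Y=\theta(X))$ is present in $R$); restricting $\theta'$ to each subgoal $L_i=p(X_1,\dots,X_n)$ yields a witness for the corresponding cloned chain, where the appended $\define(Z_j=\theta(X))$ holds because $Z_j$ unified with $Y=\theta(X)$. Since the fresh predicates $p_i$ are only used inside their own clone (no external rule defines them), the cloned subtrees behave independently and their combined derivations match the original one. In the reverse direction, if each $p_i$-clone derives a fact consistent with its appended $\define$-literals, then collecting these witnesses gives a substitution with $Y=\theta(X)$ that satisfies the body of the modified $R$ (which no longer contains the $\define$-literal), yielding the same head fact. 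The key point that makes both directions go through is that cloning and renaming prevents any ``cross-contamination'' between different occurrences of $p$, so that distinct positional constraints (the sets $J_i$) can be enforced independently without introducing spurious bindings.

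Given the one-step equivalence, I would finish by induction on the length of the rewriting that turns the top-down program into the bottom-up program: since $P$ is non-recursive, its dependency graph is finite and acyclic, so the procedure terminates (each application either consumes one intensional literal from the rule hosting the $\define$-literal or pushes the $\define$-literal strictly closer to a leaf of the tree of $P$). Composing finitely many semantics-preserving steps preserves $\answer_E$, giving $\answer_E^{\shortdownarrow}(Q)=\answer_E^{\shortuparrow}(Q)$.

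The main obstacle is the single-step claim, and within it the bookkeeping around multiple positions where $Y$ occurs in the same body literal (the set $J_i$) and around multiple defining rules for the same predicate $p$. The renaming-and-cloning device in the appendix is exactly engineered to make these two degrees of freedom independent, so the proof of the one-step claim really amounts to checking that after the rewrite the restrictions encoded by the $\define$-literals at the roots of the clones are jointly equivalent, against every $E$, to the single restriction $\define(Y=\theta(X))$ in the original rule. This is routine once one writes out the matching substitutions, but it is where all the subtlety of the construction is concentrated.
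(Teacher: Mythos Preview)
Your proposal is correct and follows exactly the paper's approach: reduce to the one-level ``move down'' transformation being semantics-preserving, then iterate along the (finite, acyclic) dependency graph. In fact you supply substantially more detail than the paper, which dispatches the lemma in a single sentence (``Is not difficult to see that the process of moving a literal down one level results in an equivalent program''); your two-direction argument for the single-step claim and your explicit handling of the cloning/renaming bookkeeping are precisely the content the paper leaves implicit.
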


\section{Comments on ESWC 2018 review}

This document is an extended version of a paper submitted to ESWC
2018. In this appendix, we will present some answers to questions
formulated by the reviewers. Some of them could be useful to
understand the notions defined here. We are currently working to
improve the readability of the paper. When we achieve the desired
clarity, we will remove this appendix.

It is important to recall that the first motivation of our work is
that the evaluation of correlated (sub)queries differs among the
current SPARQL engines (we presented examples of this fact). Second,
the origin of such differences comes from an ambiguous semantics
defined by the SPARQL 1.1 specification. Third, there is no standard
foundation or reference to explain the notion of correlated queries
(even for well-known studied languages like SQL, relational algebra,
and Datalog). Therefore, our goal was to provide a logical framework
to study possible interpretations and semantics for correlated
queries, and to understand the implementations of actual engines.

The importance of a logical foundation for query languages was noticed
by Reiter. He presented a variety of arguments in favor of doing
so. One of these arguments was that an evaluation argument could be
proved to be sound and complete with respect to the logical semantics
of the data model and the query. In this work we follow the
motivations of Reiter when providing a logical framework to understand
the notion of correlation in nested queries.

\bigskip
\noindent
Answers to questions.
\medskip

Reviewer 3 is right: we missed the paper of Kaminski et al. (ACM
2017), and it should be incorporated into the related work. We
reviewed the WWW'2016 version, but it does not mention the issues of
substitution. On the other side, the ACM'2017 version differs from our
work as it proposes a concrete correlation method. In our case, we aim
to provide a logical framework to try to understand alternative
semantics. For example, according to their proposal, the result of the
example query in the introduction of our paper is $(5,-)$. This result
is different to all solutions provided by the studied engines.

Reviewer 1: Concerning the modes of the operators join and minus. Let
us explain the case of the join (the case of the minus is
similar). Given two relations $r$ and $s$, the SPARQL operation
$(r \Join s)$ corresponds to the set of concatenations of compatible
tuples $\mu$ in $r$ and $\nu$ in $s$. That is, $\mu$ and $\nu$ holds
that $\mu(X)=\nu(X)$, $\mu(X)$ is unbound or $\nu(X)$ is unbound, for
each variable $X$ (Here we represent unbound values with the symbol
$\bot$.) This is equivalent to say that there exists an instance of
the tuples $\mu$ and nu such that $\mu(X)=\nu(X)$ (that we wrote in
modal notation). A tuple $(a,b)$ is usually called an instance (or
possible world as we mentioned previously in the paper) of
$(a,\bot)$. The same applies to literals (reviewer 4 asked
this). Reviewer 3 is also confused with the ``magic'' behavior of
$\bot$ in the modal evaluation of equalities. The
$\Box(\filter(a=\bot))$ is false because there is a instance (e.g.,
$\filter(a=b)$) where the literal is not true. On the other hand,
$\Diamond(\filter(a=\bot))$ is true because there is a instance (e.g.,
$\filter(a=a)$) where the literal is true.

In several parts of the document we use the term ``built-in''. It seems
that this notion is not clear to some reviewers. The built-in si a
formal notion in Datalog, e.g. see the survey of Ceri, Gottlob, and
Tanca, ``What You Always Wanted to Known About Datalog (And Never Dared
to Ask).'' A built-in is a literal that occurs in the body of a rule and
is evaluated after instantiating all variables occurring positively in
that rule. In this work nested queries and equality formulas are
built-ins.

Reviewer 2: Concerning the analysis of the two semantics discussed in
the related work. The answer is yes. The proposed by Patel-Schneider
and Martin evaluates the inner pattern first, so it corresponds to the
top-down logical substitution. On the other hand, the proposal of
Seaborne applies substitutions in the leaves of the syntactical tree
of an expression, so it corresponds to bottom-up logical substitution.

Reviewer 3: Concerning the expression ``equality formulas require that
all variables be assigned before being evaluated.'' The reviewer is
right in that the explanation is poor. To explain better the notion
that we are introducing, consider the rule
$p(X,Y) \leftarrow q(X), X=Y$. According to Levene and Loizou (``A
guided tour of relational databases and beyond''), if the database
contains a fact q(a) then the rule infers the fact
$p(a,a)$. Intuitively, the equality $X=Y$ ``pass'' the value from $X$
to $Y$. We do not use such notion because it brings problems when $Y$
is bound in the outer query. With our built-in equality, the rule
$p(X, Y) \leftarrow q(X), \filter(X=Y)$ is unsafe because $Y$ does not
occur positively. In other words, $Y$ is not assigned.

Reviewer 3: Concerning to Stratified Datalog. When Datalog is extended
with negation, two approaches could be used, Stratified or
Inflationary (see the survey of Ceri, Gottlob and Tanca of
1989). Datalog programs with negation satisfy several minimal Herbrand
models. This entails difficulties in defining the semantics of Datalog
programs. Stratified Datalog permit us choosing a distinguished
minimal Herbrand model by approximating the CWA. To the best of our
knowledge, all translation from SQL and SPARQL to Datalog consider
stratified programs.

Reviewer 3: About the notion of ``logically connected.'' It was
intended as an informal intuition just to say that two variables are
not logically independent (that is, the values they get are related
somewhat). For instance, consider the rules $p(X) \leftarrow q(X,Y)$
and $q(Z,U) \leftarrow r(Z,V), s(U)$. Then the variables $X$ and $Z$ are
logically connected. To be more precise, the connection is not between
the variables, but between the position where variables occur in each
rule. We can rename the variables consequently without breaking the
logical connection.

The notions of ``logical chain'', ``logical substitution'' and
``improper substitution'' are related to the notion of ``logically
connected''. A logical chain is a sequence of connections between
variables. We can apply a logical substitution $X/a$ in a variable $Y$
occurring in a rule of a nested query if there is a logical chain from
an occurrence of $X$ in the goal of the nested query to the variable
$Y$ in such rule. Similarly, the substitution is improper when there
is no such connection between the variable substituted Y and an
occurrence of $X$ in the goal of the nested query. One of our
contributions is showing that some engines do an improper
substitution.

What is the problem with improper substitution? In Sec. 3 we argue
that improper substitution compromises the compositionality of
nesting. As we explain in this appendix, one of the reasons for a
logical foundation is to provide a logical interpretation for
queries. In the proof of Lemma 2, we extend the usual translation from
Datalog programs to first order formulas by providing a translation
for the new type of atom: nested queries. In an ideal setting, each
nested query can be translated into a first order query, independently
of the outer query where it is nested. Intuitively, compositionality
is achieved when the meaning of the nested query does not depend on
the query where it is nested.  For instance, let $Q$ be a query nested
in a rule $R$ of an outer query, $X$ be a variable that occurs in a
rule $R'$ in $Q$, and $X$ be not logically connected with the goal of
$Q$. Then, allowing improper substitution, we have two possible
translations for $Q$ as a first order formula. If $X$ occurs
positively in $R$, then we remove the universal quantifier $\forall X$
to the rule $R'$. Otherwise, we do not remove it.

From reviewer 3: concerning the meaning of $\define(X=a)$. We write
$\{p(X,Y) \leftarrow q(X),\define(Y=a)\}$ to denote the program
$\{p(X,Y) \leftarrow q(X),l(Y); l(a)\}$. Thus, the program has a rule and a
fact.

From reviewers 3 and 4: The notice that in the table of page 6, when
$\define(X = \theta(X))$ is moved down, then it changes to
$\define(Y = \theta(X))$. Reviewers suggest that it could be a
typo. However, this is correct. The variables on the left side of the
equality correspond to the variable that is connected with X, that may
change depending on the rule. Also $\theta(X)$ does not change because
it is a constant in the current substitution.

Reviewer 4 complains that the semantics of $S,\theta \models P$ was
not explained for a program $P$. We do not define
$S,\theta \models P$, because it is not necessary to define the
semantics of a program. It suffices to define what facts are inferred
by a rule. The procedure for doing that is standard in the Datalog
literature. First, a true value is defined for literals in the body of
a rule with respect to a substitution. We write $S,\theta \models L$
to denote that a literal $L$ is evaluated as true with respect to a
substitution $\theta$ and a set of facts $S$. Second, a fact
$\theta(H)$ is inferred in a rule if $H$ is the head and for each
literal $L$ in the body of the rule, $L$ is true with respect to
$\theta$. This process ends when no more facts can be inferred from
all the rules of the program. Thus, the semantics of the program is
defined by this inference process.

For first order formulas $\phi$ and $\psi$, the notation
$\phi \models \psi$ has the standard meaning. That is, $\psi$ is a
logical consequence of $\psi$.

Reviewer 4: concerning to how variables in a negated subquery can be
positively defined. Variables are defined positively in the scope of a
rule. As subqueries are built-ins, a subquery $Q$ does not matter for
counting the variables occurring positively in the rule where $Q$
occurs. Also, the variables that are defined positively in each rule
of the subquery $Q$ do not depend on if $Q$ is negated or not.

Reviewer 4: concerning to what ``evaluate $P$'' means. It means
computing all the facts that can be inferred from P in the current
database.

Reviewer 4: concerning to what ``compatible'' means. Compatibility is
an standard SPARQL notion. Two mappings mu and nu are said to be
compatible if for each variable $X$ in the domain of mu holds
$\mu(X)=\nu(X)$ or $X$ is unbound in $\nu$.

Reviewer 4: concerning what ``moving down literals of the form
$\define(Y=\theta(X))$'' mean. It essentially means if $Y$ is logically
connected with another variable $Z$ that occurs in a rule below in
dependency graph of the program, then an equivalent literal
$\define(Z=\theta(X))$ is appended to such rule, and then removed from
the original rule. This process is detailed in Appendix A.

Reviewer 4: concerning what ``if the logical substitution in a query
uses the definition of substitution for a rule.'' The answer is
yes. We first defined how to substitute a variable $X$ by a value $a$
in a rule. Then we defined how to substitute $X$ by $a$ in a
program. It is not simply substituting $X$ by $a$ in a rule where $X$
occurs because in Datalog the scope of variables are the rules where
they occur. Thus, we need to identify the variables that are logically
connected and then apply a substitution of them, using the previous
definition of substitution in a rule. We denote the substitution of
$\theta$ in a program $P$ as $\theta(P)$.

Reviewer 4: concerning to ``extensional database.'' This is a standard
concept of Datalog that can be revised in the Datalog survey of Ceri
et al. mentioned before. Essentially, an extensional database is the
set of facts where the query is evaluated, and the intentional
database is such that includes the inferred facts.

Reviewer 4: about how the corollary follows from Lemma 2. First order
queries and standard Datalog have the same expressive power (see the
survey referred above). Lemma 2 says that Nested Datalog also has the
same expressive power than first order queries. Then, we conclude that
nesting does not add expressive power to Datalog. This applies to all
the presented methods of substitution because all of them were checked
in the proof of Lemma 2.

Reviewer 4: concerning the word EXISTStential in the
bibliography. This word is correct. It was part of the title of the
cited paper. The authors create this new world to make emphasize of
the keyword EXISTS.

\end{document}